\definecolor{shadecolor}{gray}{0.95}
\declaretheoremstyle[
headfont=\normalfont\bfseries,
notefont=\mdseries, notebraces={(}{)},
bodyfont=\normalfont,
postheadspace=0.5em,
spaceabove=1pt,
mdframed={
  skipabove=8pt,
  skipbelow=8pt,
  hidealllines=true,
  backgroundcolor={shadecolor},
  innerleftmargin=4pt,
  innerrightmargin=4pt}
]{shaded}
\newcommand{\R}{\mathbb{R}} 
\newcommand{\eqdef}{\overset{\text{def}}{=}} 
\newcommand{\Exp}[1]{{\rm E}\left[#1\right] }    
\newcommand{\EE}[2]{{\bf E}_{#1}\left[#2\right] } 
\newcommand{\VV}[2]{{\bf Var}_{#1}\left[#2\right] }
\declaretheorem[style=shaded,within=section]{definition}
\declaretheorem[style=shaded,sibling=definition]{theorem}
\declaretheorem[style=shaded,sibling=definition]{lemma}
\theoremstyle{remark}
\newtheorem{example}{Example} 
\newtheorem{remark}{Remark} 
\title{Randomized  Distributed Mean Estimation: \\ Accuracy vs Communication}
 \author{Jakub Kone\v{c}n\'{y}\thanks{The author acknowledges support from Google via a Google European Doctoral Fellowship.} \qquad \qquad Peter Richt\'{a}rik\thanks{The author acknowledges support from Amazon, and the EPSRC Grant EP/K02325X/1, Accelerated Coordinate Descent Methods for Big Data Optimization.}  \\\\{\em School of Mathematics}\\{\em The University of Edinburgh}\\{\em United Kingdom}}
\begin{document}

\maketitle

\begin{abstract}
We consider the problem of estimating the arithmetic average of a finite collection of real vectors stored in a distributed fashion across several compute nodes subject to a communication budget constraint. Our analysis does not rely  on any statistical assumptions about the source of the vectors. This problem arises as a subproblem in many applications, including reduce-all operations within algorithms for distributed and federated optimization and learning. We propose a flexible family of randomized algorithms exploring the trade-off between expected communication cost and estimation error. Our family contains the full-communication and zero-error method on one extreme, and an $\epsilon$-bit communication and ${\cal O}\left(1/(\epsilon n)\right)$ error method on the opposite extreme. In the special case where we  communicate, in expectation, a single bit per coordinate of each vector, we improve upon existing results by obtaining $\mathcal{O}(r/n)$ error, where $r$ is the number of bits used to represent a floating point value. 
\end{abstract}

\section{Introduction} 
\label{sec:introduction}

We address the problem of estimating the arithmetic mean of $n$ vectors,  $X_1, \dots, X_n \in \R^d$, stored in a distributed fashion across $n$ compute nodes, subject to a constraint on the communication cost. 

In particular, we consider a star network topology  with a single server at the centre and $n$ nodes connected to it. All nodes send an encoded (possibly via a lossy randomized transformation) version of their vector  to the server, after which the server  performs a decoding operation to estimate the true mean \[X \eqdef \frac{1}{n}\sum_{i=1}^n X_i.\] The purpose of the encoding operation is to compress the vector so as to save on communication cost, which is typically the bottleneck in practical applications. 

To better illustrate the setup, consider the naive approach in which all nodes send the vectors without performing any encoding operation, followed by the application of a simple averaging decoder by the server. This results in zero estimation error at the expense of maximum communication cost of $ndr$ bits, where $r$ is the number  of bits needed to communicate a single floating point entry/coordinate  of $X_i$.

\subsection{Background and  Contributions}

The distributed mean estimation problem was recently  studied in a statistical framework where it is assumed that  the vectors $X_i$ are independent and identicaly distributed samples from some specific underlying distribution. In such a setup, the goal  is to estimate the true mean of the underlying distribution \cite{zhang2012communication, info_lower2013, comm_distr2014, braverman2015communication}. These works formulate lower and upper bounds on the communication cost  needed to achieve  the minimax optimal estimation error.

In contrast, we do not make any statistical assumptions on the source of the vectors, and study the trade-off between expected communication costs and mean square error of the estimate. Arguably, this setup is a more robust and accurate model of  the distributed mean estimation problems arising as subproblems in  applications such as reduce-all operations within algorithms for distributed and federated optimization \cite{HYDRA, add_vs_avg_2015, ma2015distributed, reddi2016aide, Federated_Opt}. In these applications, the averaging operations need to be done repeatedly throughout the iterations of a master learning/optimization algorithm, and  the vectors $\{X_i\}$  correspond to updates to a global model/variable. In these applications, the vectors evolve throughout the iterative process in a complicated  pattern, typically approaching zero as the master algorithm converges to optimality. Hence, their statistical properties change, which renders  fixed statistical assumptions not satisfied in practice.

For instance, when training a deep neural network model in a distributed environment, the vector $X_i$ corresponds to a stochastic gradient based on a minibatch of data stored on node $i$. In this setup we do not  have any useful prior statistical knowledge about the high-dimensional vectors to be aggregated. It has recently been observed that when communication cost is high, which is typically the case for commodity clusters, and even more so in a federated optimization framework, it is can be very useful to sacrifice on estimation accuracy in favor of reduced communication \cite{Federated_modelAVG, Federated_learning2016}.

In this paper we propose a {\em parametric family of randomized methods for estimating the mean $X$}, with parameters being a set of {\em probabilities} $p_{ij}$ for $i=1,\dots,n$ and $j=1,2,\dots,d$ and {\em node centers} $\mu_i \in \R^d$ for $i=1,2,\dots,n$. The exact meaning of these parameters is explained in Section~\ref{sec:encode}.   By varying the probabilities, at one extreme, we recover the exact  method described, enjoying zero estimation error at the expense of full communication cost. At the opposite extreme are methods with arbitrarily small expected communication cost, which is achieved at the expense of suffering an exploding estimation error. Practical methods appear somewhere on the continuum between these two extremes, depending on the specific requirements of the application at hand.  Suresh et al.\ \cite{Distributed_mean} propose a method combining a pre-processing step via a random structured rotation, followed by randomized binary quantization. Their quantization protocol arises as a suboptimal special case of our parametric family of methods.

To illustrate our results, consider the  special case in which we choose to communicate a single bit per element of $X_i$ only. We then obtain an $\mathcal{O}\left( \frac{r}{n}R \right)$ bound on the mean square error, where $r$ is number of bits used to represent a floating point value, and $R = \frac1n \sum_{i=1}^n \| X_i - \mu_i 1 \|^2$ with $\mu_i \in \R$ being the average of elements of $X_i$, and $1$ the all-ones vector in $\R^d$ (see Example 7 in Section~\ref{sec:examples}). Note that this bound improves upon the performance of the method of \cite{Distributed_mean} in two aspects. First, the bound is independent of $d$, improving from logarithmic dependence. Further, due to a preprocessing rotation step, their method requires $\mathcal{O}(d \log d)$ time to be implemented on each node, while our method is linear in  $d$. This and other special cases are summarized in Table~\ref{tbl:main1} in Section~\ref{sec:examples}. 

While the above already improves upon the state of the art, the improved results are in fact obtained for a suboptimal choice of the parameters of our method (constant probabilities $p_{ij}$, and node centers fixed to the mean $\mu_i$). One can decrease the MSE further by optimizing over the probabilities and/or node centers (see Section~\ref{sec:opt}). However, apart from a very low communication cost regime in which we have a closed form expression for the optimal probabilities, the problem needs to be solved numerically, and hence we do not have expressions for how much improvement is possible. We illustrate the effect of fixed and optimal probabilities on the trade-off between communication cost and MSE experimentally on a few selected datasets in Section~\ref{sec:opt} (see Figure~\ref{fig:uniform_vs_optimal}).

\subsection{Outline}

In Section~\ref{sec:3protocols} we formalize the concepts of encoding and decoding protocols. 
In Section~\ref{sec:encode} we describe a parametric family of randomized (and unbiased) encoding protocols and give a simple formula for the mean squared error. Subsequently, in Section~\ref{sec:comm} we formalize the notion of communication cost, and describe several communication protocols, which are optimal under different circumstances. We give simple instantiations of our protocol in Section~\ref{sec:examples}, illustrating the trade-off between communication costs and accuracy. In Section~\ref{sec:opt} we address the question of the optimal choice of parameters of our protocol. Finally, in Section~\ref{sec:further} we comment on possible extensions we leave out to future work.

\section{Three Protocols}
\label{sec:3protocols}

In this work we consider (randomized) {\em encoding protocols} $\alpha$, {\em communication protocols} $\beta$ and {\em decoding protocols} $\gamma$ using which the averaging is performed inexactly as follows. Node $i$ computes a (possibly stochastic) estimate of $X_i$ using the encoding protocol, which we denote $Y_i = \alpha(X_i) \in \R^d$, and sends it to the server using communication protocol $\beta$. By $\beta(Y_i)$ we denote the number of bits that need to be transferred under $\beta$. The server then estimates $X$ using the decoding protocol $\gamma$ of the estimates:
$$ Y \eqdef \gamma(Y_1, \dots, Y_n). $$

The objective of this work is to study the trade-off between the (expected) number of bits that need to be communicated, and the accuracy of $Y$ as an estimate of $X$. 

In this work we focus on encoders which are unbiased, in the following sense.

\begin{definition}[Unbiased and Independent Encoder] 
We say that encoder $\alpha$ is unbiased if  $\EE{\alpha}{\alpha(X_i)} = X_i$ for all $i=1,2,\dots,n$. We say that it is independent, if $\alpha(X_i)$ is independent from $\alpha(X_j)$ for all $i\neq j$.
\end{definition}

\begin{example}[Identity Encoder]
A trivial  example of an encoding protocol is the identity function: $\alpha(X_i) = X_i$.  It is both unbiased and independent. This encoder does not lead to any savings in communication that would be otherwise infeasible though.
\end{example}


We now formalize the notion of accuracy of estimating $X$ via $Y$. Since $Y$ can be random, the notion of accuracy will naturally be  probabilistic.

\begin{definition}[Estimation Error / Mean Squared Error] 
The {\em mean squared error} of protocol $(\alpha,\gamma)$ is the quantity 
\begin{eqnarray*} 
MSE_{\alpha,\gamma}(X_1, \dots, X_n) &=& \EE{\alpha,\gamma}{\| Y - X \|^2}\\
& =& \EE{\alpha, \gamma}{\left\| \gamma(\alpha(X_1),\dots,\alpha(X_n)) - X\right\|^2}.\end{eqnarray*}
\end{definition}

To illustrate the above concept, we now give a few examples:

\begin{example}[Averaging Decoder]
\label{ex:avg_decoder}
If $\gamma$ is the averaging function, i.e., $\gamma(Y_1, \dots, Y_n) = \frac1n \sum_{i=1}^n Y_i,$ then 
\[MSE_{\alpha,\gamma}(X_1, \dots, X_n) = \frac{1}{n^2}\EE{\alpha}{\left\| \sum_{i=1}^n \alpha(X_i) - X_i \right\|^2}.\]
\end{example}

The next example generalizes the identity encoder and averaging decoder.

\begin{example}[Linear Encoder and Inverse Linear Decoder]
\label{ex:linear_encoder}
Let $A:\R^d\to \R^d$ be linear and invertible.   Then we can set $Y_i = \alpha(X_i) \eqdef A X_i$ and $\gamma(Y_1,\dots,Y_n) \eqdef A^{-1} \left(\frac{1}{n}\sum_{i=1}^n Y_i\right)$. If $A$ is random, then $\alpha$ and $\gamma$ are random (e.g., a structured random rotation, see \cite{yu2016orthogonal}). Note that \[\gamma(Y_1,\dots,Y_n) =  \frac{1}{n}\sum_{i=1}^n A^{-1} Y_i = \frac{1}{n}\sum_{i=1}^n X_i = X,\]
and hence the MSE of $(\alpha,\gamma)$ is zero.
\end{example}

We shall now prove a simple result for unbiased and independent encoders used in subsequent sections.
\begin{lemma}[Unbiased and Independent Encoder + Averaging Decoder] 
\label{lem:general_MSE} 
If the encoder $\alpha$ is unbiased and independent, and $\gamma$ is the averaging decoder, then 
\[MSE_{\alpha,\gamma}(X_1, \dots, X_n) = \frac{1}{n^2}\sum_{i=1}^n \EE{\alpha}{\|Y_i-X_i\|^2} = \frac{1}{n^2}\sum_{i=1}^n \VV{\alpha}{\alpha(X_i)} .\]
\end{lemma}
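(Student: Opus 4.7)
The plan is to start from the formula in Example~\ref{ex:avg_decoder}, which already says that under the averaging decoder,
\[
MSE_{\alpha,\gamma}(X_1,\dots,X_n) = \frac{1}{n^2}\EE{\alpha}{\Big\| \sum_{i=1}^n (\alpha(X_i)-X_i) \Big\|^2}.
\]
From here everything reduces to expanding this squared norm of a sum of independent mean-zero vectors.

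First I would set $Z_i \eqdef \alpha(X_i) - X_i$. By the unbiasedness hypothesis, $\EE{\alpha}{Z_i} = 0$ for each $i$, and by the independence hypothesis, the random vectors $Z_1,\dots,Z_n$ are independent. Next I would expand
\[
\Big\| \sum_{i=1}^n Z_i \Big\|^2 = \sum_{i=1}^n \|Z_i\|^2 + \sum_{i\neq j} \langle Z_i, Z_j \rangle
\]
and take expectation. For the cross terms, independence and linearity of expectation give $\EE{\alpha}{\langle Z_i, Z_j\rangle} = \langle \EE{\alpha}{Z_i}, \EE{\alpha}{Z_j}\rangle = 0$ whenever $i\neq j$, so only the diagonal terms survive. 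This yields the first claimed equality
\[
MSE_{\alpha,\gamma}(X_1,\dots,X_n) = \frac{1}{n^2}\sum_{i=1}^n \EE{\alpha}{\|Y_i - X_i\|^2}.
\]

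Finally, to upgrade the right-hand side to a variance, I would observe that unbiasedness means $X_i = \EE{\alpha}{\alpha(X_i)} = \EE{\alpha}{Y_i}$, so $\EE{\alpha}{\|Y_i-X_i\|^2} = \EE{\alpha}{\|Y_i - \EE{\alpha}{Y_i}\|^2}$, which is precisely $\VV{\alpha}{\alpha(X_i)}$ (the variance of a vector-valued random variable, i.e., the trace of its covariance matrix). This gives the second equality.

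There is no real obstacle here; the only mildly delicate point is ensuring that the independence assumption on the vectors $\alpha(X_i)$ is actually used to kill the cross terms (as opposed to merely mean-zero-ness, which is not enough once one sums and then squares). Everything else is a direct application of linearity of expectation and the definition of variance.
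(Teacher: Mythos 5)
Your proposal is correct and follows essentially the same route as the paper's proof: expand the averaging decoder, use unbiasedness to make each $Y_i - X_i$ mean zero, and use independence to kill the cross terms so that only the diagonal (variance) terms survive. You simply spell out the bilinear expansion that the paper compresses into a single step, and your closing remark correctly identifies that independence (not just mean-zero-ness) is what is really needed there.
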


\begin{proof} 
Note that $\EE{\alpha}{Y_i} = X_i$ for all $i$. We have
\begin{eqnarray}
MSE_{\alpha}(X_1,\dots,X_n) &=& \EE{\alpha}{\|Y-X\|^2} \notag \\
&\overset{(*)}{=}& \frac{1}{n^2} \EE{\alpha}{\left\|\sum_{i=1}^n Y_i - X_i \right\|^2} \notag \\
&\overset{(**)}{=}& \frac{1}{n^2} \sum_{i=1}^n \EE{\alpha}{\left\|Y_i - \EE{\alpha}{Y_i} \right\|^2} \notag \\
&=& \frac{1}{n^2} \sum_{i=1}^n \VV{\alpha}{\alpha(X_i)} \notag,
\end{eqnarray}
where (*) follows from unbiasedness and (**) from independence.
\end{proof}

One may wish to define the encoder as a combination of two or more separate encoders: $\alpha(X_i) = \alpha_2(\alpha_1(X_i))$. See \cite{Distributed_mean} for an example where $\alpha_1$ is a random rotation and $\alpha_2$ is binary quantization.

\section{A Family of Randomized Encoding Protocols}
\label{sec:encode}

Let $X_1,\dots, X_n\in \R^d$ be given. We shall write $X_{i} = (X_{i}(1),\dots, X_{i}(d))$ to denote the entries  of vector $X_i$. In addition, with each $i$ we also associate a parameter $\mu_i\in \R$. We refer to $\mu_i$ as the center of data at node $i$, or simply as {\em node center}. For now, we assume these parameters are fixed and we shall later comment on how to choose them optimally.

We shall define \emph{support} of $\alpha$ on node $i$ to be the set $S_i \eqdef \{ j \;:\; Y_i(j) \neq \mu_i \}$. We now define two parametric families of randomized encoding protocols. The first results in $S_i$ of random size, the second has  $S_i$ of a fixed size.

\subsection{Encoding Protocol with Variable-size Support}

With each pair $(i,j)$ we associate a parameter $0< p_{ij}\leq 1$,  representing a probability. The collection of parameters $\{p_{ij}, \mu_i\}$ defines an encoding protocol $\alpha$ as follows:
\begin{equation}
\label{eq:randomized_protocol}
Y_{i}(j) = 
\begin{cases}
\frac{X_{i}(j)}{p_{ij}} - \frac{1-p_{ij}}{p_{ij}} \mu_i & \quad \text{with probability} \quad p_{ij}, \\
\mu_i & \quad \text{with probability} \quad 1-p_{ij}.
\end{cases}
\end{equation}

\begin{remark}
Enforcing the probabilities to be positive, as opposed to nonnegative,  leads to vastly simplified notation in what follows. However, it is more natural to allow $p_{ij}$ to be  zero, in which case we have $Y_i(j)=\mu_i$ with probability 1. This raises issues such as potential lack of unbiasedness, which can be resolved, but only at the expense of a larger-than-reasonable notational overload.
\end{remark}


In the rest of this section, let $\gamma$ be the averaging decoder (Example~\ref{ex:avg_decoder}). Since $\gamma$ is fixed and deterministic, we shall for simplicity write $\EE{\alpha}{\cdot}$ instead of $\EE{\alpha,\gamma}{\cdot}$. Similarly, we shall write $MSE_\alpha(\cdot)$ instead of $MSE_{\alpha, \gamma}(\cdot)$.

We now prove two lemmas describing properties of the encoding protocol $\alpha$. Lemma~\ref{lem:unbiasedness} states that the protocol yields an unbiased estimate of the average $X$ and Lemma~\ref{lem:MSE} provides the expected mean square error of the estimate.

\begin{lemma}[Unbiasedness] 
\label{lem:unbiasedness}
The encoder $\alpha$ defined in \eqref{eq:randomized_protocol} is unbiased. That is,  $\EE{\alpha}{\alpha(X_i)} = X_i$ for all $i$. As a result, $Y$ is an unbiased estimate of the true average: $\EE{\alpha}{Y} = X$.
\end{lemma}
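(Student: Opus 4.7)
The plan is to verify unbiasedness coordinate-by-coordinate, then lift it to the full vector by concatenation, and finally to the average $Y$ by linearity of expectation. Since the entries $Y_i(j)$ are generated independently through the two-outcome rule in \eqref{eq:randomized_protocol}, the computation reduces to a single scalar expectation.

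First I would fix $i$ and $j$ and compute $\EE{\alpha}{Y_i(j)}$ directly from \eqref{eq:randomized_protocol}. The key observation is that the nonzero-probability branch outputs $X_i(j)/p_{ij} - (1-p_{ij})\mu_i/p_{ij}$, which is precisely designed so that multiplying by $p_{ij}$ cancels the $p_{ij}$ in the denominators and yields $X_i(j) - (1-p_{ij})\mu_i$. The complementary branch contributes $(1-p_{ij})\mu_i$, and the two $(1-p_{ij})\mu_i$ terms cancel, leaving $X_i(j)$. (This is essentially the standard importance-weighting trick: subtract a baseline $\mu_i$, rescale the deviation by $1/p_{ij}$, and add back $\mu_i$.)

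Since this holds for every coordinate $j\in\{1,\dots,d\}$, I conclude $\EE{\alpha}{\alpha(X_i)} = X_i$. For the second claim, I use the averaging decoder (which is deterministic) together with linearity of expectation:
\[
\EE{\alpha}{Y} = \EE{\alpha}{\frac{1}{n}\sum_{i=1}^n \alpha(X_i)} = \frac{1}{n}\sum_{i=1}^n \EE{\alpha}{\alpha(X_i)} = \frac{1}{n}\sum_{i=1}^n X_i = X.
\]

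There is no real obstacle here; the only subtlety is that the positivity assumption $p_{ij}>0$ (flagged in the remark) is what makes the division by $p_{ij}$ well-defined, so the unbiasedness calculation is valid without any caveats. The coordinate-wise independence mentioned in the definition of the encoder plays no role for unbiasedness (it will matter for the MSE in Lemma~\ref{lem:MSE}, but not here), so I would not invoke it.
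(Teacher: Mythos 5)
Your proposal is correct and follows essentially the same route as the paper: reduce to a single coordinate by linearity, compute the two-branch expectation so the $(1-p_{ij})\mu_i$ terms cancel, and then average. The remarks about $p_{ij}>0$ and about independence being irrelevant here are accurate but add nothing beyond the paper's argument.
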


\begin{proof}
Due to linearity of expectation, it is enough to show that $\EE{\alpha}{Y(j)}=X(j)$ for all $j$. Since $Y(j) = \frac{1}{n}\sum_{i=1}^n Y_{i}(j)$ and $X(j) = \frac{1}{n}\sum_{i=1}^n X_{i}(j)$, it suffices to show that $\EE{\alpha}{Y_i(j)}=X_i(j)$:
$$ \EE{\alpha}{Y_i(j)} = p_{ij} \left( \frac{X_{i}(j)}{p_{ij}} - \frac{1-p_{ij}}{p_{ij}} \mu_i(j) \right) + (1-p_{ij}) \mu_i(j) = X_{i}(j), $$
and the claim is proved.
\end{proof}

\begin{lemma}[Mean Squared Error]
\label{lem:MSE}
Let $\alpha = \alpha(p_{ij},\mu_i)$ be the encoder defined in \eqref{eq:randomized_protocol}. Then
\begin{equation}
\label{eq:MSE_general}
MSE_{\alpha}(X_1, \dots, X_n) = \frac{1}{n^2} \sum_{i,j} \left( \frac{1}{p_{ij}} - 1 \right) \left( X_i(j) - \mu_i \right)^2.
\end{equation}
\end{lemma}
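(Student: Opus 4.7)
The plan is to apply Lemma~\ref{lem:general_MSE} (the formula for MSE under an unbiased, independent encoder combined with the averaging decoder). Since the encoder in \eqref{eq:randomized_protocol} samples each coordinate independently and, by Lemma~\ref{lem:unbiasedness}, is unbiased, independence across nodes $i\neq j$ is immediate (each node draws its own randomness), so Lemma~\ref{lem:general_MSE} applies and yields
\[
MSE_\alpha(X_1,\dots,X_n) \;=\; \frac{1}{n^2}\sum_{i=1}^n \VV{\alpha}{\alpha(X_i)}.
\]
Moreover, because the coordinates $\{Y_i(j)\}_{j=1}^d$ are drawn independently within a node, $\VV{\alpha}{\alpha(X_i)}=\sum_{j=1}^d \VV{\alpha}{Y_i(j)}$. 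Thus the whole lemma reduces to computing the scalar variance $\VV{\alpha}{Y_i(j)}$ for each pair $(i,j)$.

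Next I would compute this per-coordinate variance directly from the two-point distribution \eqref{eq:randomized_protocol}. Since $\EE{\alpha}{Y_i(j)}=X_i(j)$ (Lemma~\ref{lem:unbiasedness}), I write
\[
\VV{\alpha}{Y_i(j)} = p_{ij}\!\left(\tfrac{X_i(j)}{p_{ij}} - \tfrac{1-p_{ij}}{p_{ij}}\mu_i - X_i(j)\right)^{\!2} + (1-p_{ij})\bigl(\mu_i - X_i(j)\bigr)^2.
\]
The first deviation simplifies as
\[
\tfrac{X_i(j)}{p_{ij}} - \tfrac{1-p_{ij}}{p_{ij}}\mu_i - X_i(j) \;=\; \tfrac{1-p_{ij}}{p_{ij}}\bigl(X_i(j)-\mu_i\bigr),
\]
so the first summand equals $\tfrac{(1-p_{ij})^2}{p_{ij}}(X_i(j)-\mu_i)^2$. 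Combining with the second summand and factoring out $(1-p_{ij})(X_i(j)-\mu_i)^2$ gives
\[
\VV{\alpha}{Y_i(j)} \;=\; (1-p_{ij})\!\left(\tfrac{1-p_{ij}}{p_{ij}}+1\right)(X_i(j)-\mu_i)^2 \;=\; \left(\tfrac{1}{p_{ij}}-1\right)(X_i(j)-\mu_i)^2.
\]

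Summing over $j$ to get $\VV{\alpha}{\alpha(X_i)}$, then over $i$, and dividing by $n^2$ as in Lemma~\ref{lem:general_MSE} yields exactly \eqref{eq:MSE_general}. There is no real obstacle: the computation is a short algebraic manipulation of a two-point distribution, with the only conceptual step being the reduction to coordinatewise variances, which follows from the coordinate-independent sampling in \eqref{eq:randomized_protocol} together with the previously established lemma.
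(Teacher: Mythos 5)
Your proof is correct and follows essentially the same route as the paper: reduce to per-node variances via Lemma~\ref{lem:general_MSE}, decompose coordinatewise, and compute the two-point variance directly, arriving at $\left(\tfrac{1}{p_{ij}}-1\right)(X_i(j)-\mu_i)^2$ exactly as in the paper. The only cosmetic remark is that the coordinatewise decomposition $\EE{\alpha}{\|Y_i-X_i\|^2}=\sum_j \EE{\alpha}{(Y_i(j)-X_i(j))^2}$ holds by linearity of expectation alone, so the appeal to independence of the coordinates within a node, while true, is not needed there.
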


\begin{proof} 
Using Lemma~\ref{lem:general_MSE}, we have
\begin{eqnarray}
MSE_{\alpha}(X_1,\dots,X_n) &=&  \frac{1}{n^2} \sum_{i=1}^n \EE{\alpha}{\left\|Y_i - X_i \right\|^2} \notag \\
&=&\frac{1}{n^2} \sum_{i=1}^n \EE{\alpha}{\sum_{j=1}^d (Y_i(j) - X_i(j) )^2} \notag \\
&=&\frac{1}{n^2} \sum_{i=1}^n \sum_{j=1}^d \EE{\alpha}{ (Y_i(j) - X_i(j) )^2}.
\label{eq:6867sgs7}
\end{eqnarray}
For any $i,j$ we further have
\begin{align*}
\EE{\alpha}{ (Y_i(j) - X_i(j) )^2} &= 
p_{ij} \left( \frac{X_{i}(j)}{p_{ij}} - \frac{1-p_{ij}}{p_{ij}} \mu_i - X_i(j) \right)^2 + (1 - p_{ij}) \left( \mu_i - X_i(j) \right)^2 \\
&= \frac{(1 - p_{ij})^2}{p_{ij}} \left( X_i(j) - \mu_i \right)^2 + (1 - p_{ij}) \left( \mu_i - X_i(j) \right)^2 \\
&= \left( \frac{1-p_{ij}}{p_{ij}} \right) \left( X_i(j) - \mu_i \right)^2.
\end{align*}
It suffices to substitute the above into \eqref{eq:6867sgs7}.
\end{proof}

\subsection{Encoding Protocol with Fixed-size Support}

Here we propose an alternative encoding protocol, one with deterministic support size. As we shall see later, this results in deterministic communication cost.

Let $\sigma_k(d)$ denote the set of all subsets of $\{1, 2, \dots, d\}$ containing $k$ elements. The protocol $\alpha$ with a single integer parameter $k$ is then working as follows: First, each node $i$ samples $\mathcal{D}_i \in \sigma_k(d)$ uniformly at random, and then sets
\begin{equation}
\label{eq:randomized_protocol_2}
Y_{i}(j) = 
\begin{cases}
\frac{d X_{i}(j)}{k} - \frac{d-k}{k} \mu_i & \quad \text{if} \quad j \in \mathcal{D}_i, \\
\mu_i & \quad \text{otherwise}.
\end{cases}
\end{equation}

Note that due to the design, the size of the support of $Y_i$ is always $k$, i.e., $|S_i| = k$. Naturally, we can expect this protocol to perform practically the same as the protocol \eqref{eq:randomized_protocol} with $p_{ij} = k/d$, for all $i, j$. Lemma~\ref{lem:MSE_2} indeed suggests this is the case. While this protocol admits  a more efficient communication protocol (as we shall see in Section~),  protocol \eqref{eq:randomized_protocol} enjoys a larger parameters space, ultimately leading to better MSE.  We comment on this tradeoff in subsequent sections.

As for the data-dependent protocol, we prove basic properties. The proofs are similar to those of Lemmas~\ref{lem:unbiasedness} and \ref{lem:MSE} and we defer them to Appendix~\ref{sec:app:alternative_protocol}.

\begin{lemma}[Unbiasedness] 
\label{lem:unbiasedness_2}
The encoder $\alpha$ defined in \eqref{eq:randomized_protocol} is unbiased. That is,  $\EE{\alpha}{\alpha(X_i)} = X_i$ for all $i$. As a result, $Y$ is an unbiased estimate of the true average: $\EE{\alpha}{Y} = X$.
\end{lemma}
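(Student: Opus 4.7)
The plan is to mirror the proof of Lemma~\ref{lem:unbiasedness} almost verbatim, after reducing the fixed-support protocol to a coordinatewise Bernoulli computation. Concretely, by linearity of expectation it again suffices to show $\EE{\alpha}{Y_i(j)} = X_i(j)$ for each $i$ and $j$; the conclusions $\EE{\alpha}{\alpha(X_i)} = X_i$ and $\EE{\alpha}{Y} = X$ then follow by linearity and by the definition of the averaging decoder, exactly as in Lemma~\ref{lem:unbiasedness}.

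The one new ingredient is the marginal probability $\Prob[j \in \mathcal{D}_i]$. Since $\mathcal{D}_i$ is chosen uniformly from $\sigma_k(d)$, a standard counting argument gives
\[
\Prob[j \in \mathcal{D}_i] \;=\; \frac{\binom{d-1}{k-1}}{\binom{d}{k}} \;=\; \frac{k}{d},
\]
independently of $j$. Thus each coordinate $Y_i(j)$ is, marginally, distributed exactly as in \eqref{eq:randomized_protocol} with $p_{ij} = k/d$; the joint dependence across coordinates (which does exist, since $|\mathcal{D}_i|$ is fixed at $k$) is irrelevant for computing the expectation of a single entry.

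Substituting into the two-case definition \eqref{eq:randomized_protocol_2} yields
\[
\EE{\alpha}{Y_i(j)} \;=\; \frac{k}{d}\!\left(\frac{d X_i(j)}{k} - \frac{d-k}{k}\mu_i\right) + \left(1-\frac{k}{d}\right)\mu_i \;=\; X_i(j) - \frac{d-k}{d}\mu_i + \frac{d-k}{d}\mu_i \;=\; X_i(j),
\]
which is precisely the required identity.

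There is no real obstacle here; the only thing to be careful about is to emphasize that marginal unbiasedness suffices, since a reader might initially worry about the correlations induced by sampling $\mathcal{D}_i$ without replacement. Because the statement concerns only first moments, those correlations can be safely ignored, in contrast with the analogue of Lemma~\ref{lem:MSE} for this protocol, where the variance computation would have to account for them.
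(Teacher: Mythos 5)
Your proof is correct and follows essentially the same route as the paper's: the paper's appendix proof likewise sums over all subsets in $\sigma_k(d)$ and uses the counts $\binom{d-1}{k-1}$ and $\binom{d-1}{k}$, which is exactly your marginal-probability computation $\Prob[j\in\mathcal{D}_i]=k/d$ written out before simplifying. Your explicit remark that only the marginal distribution of each coordinate matters (and that the cross-coordinate correlations would only become relevant for the variance) is a nice clarification but does not change the substance of the argument.
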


\begin{lemma}[Mean Squared Error]
\label{lem:MSE_2} 
Let $\alpha = \alpha(k)$ be encoder defined as in \eqref{eq:randomized_protocol_2}. Then
\begin{equation}
\label{eq:MSE_general_2}
MSE_{\alpha}(X_1, \dots, X_n) = \frac{1}{n^2} \sum_{i=1}^n \sum_{j=1}^d \left( \frac{d-k}{k} \right) \left( X_i(j) - \mu_i \right)^2.
\end{equation}
\end{lemma}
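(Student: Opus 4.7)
My plan is to mirror the proof of Lemma~\ref{lem:MSE} almost verbatim, with the only genuinely new input being the computation of the marginal sampling probability $P(j\in \mathcal{D}_i) = k/d$. Observe that nodes sample their sets $\mathcal{D}_i$ independently, so $\alpha(X_i)$ is independent of $\alpha(X_j)$ for $i\neq j$; together with Lemma~\ref{lem:unbiasedness_2} this lets me invoke Lemma~\ref{lem:general_MSE} to get
\[
MSE_{\alpha}(X_1,\dots,X_n) \;=\; \frac{1}{n^2}\sum_{i=1}^n \EE{\alpha}{\|Y_i - X_i\|^2}.
\]

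Next I would expand the squared norm into coordinates and pull the sum outside the expectation by linearity:
\[
\EE{\alpha}{\|Y_i - X_i\|^2} \;=\; \sum_{j=1}^d \EE{\alpha}{(Y_i(j) - X_i(j))^2}.
\]
The potentially worrying point here is that, within a single node $i$, the coordinates $Y_i(1),\dots,Y_i(d)$ are \emph{not} independent (they share the same random set $\mathcal{D}_i$). I would explicitly note that this is irrelevant, since linearity of expectation needs no independence; only the marginal law of each $Y_i(j)$ enters. I expect this observation to be the only conceptual (as opposed to computational) content of the proof.

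Since $\mathcal{D}_i$ is uniform over $\sigma_k(d)$, by symmetry every coordinate lies in $\mathcal{D}_i$ with probability $k/d$. Conditioning on this event,
\[
\EE{\alpha}{(Y_i(j)-X_i(j))^2} = \tfrac{k}{d}\!\left(\tfrac{d\,X_i(j)}{k} - \tfrac{d-k}{k}\mu_i - X_i(j)\right)^{\!2} + \tfrac{d-k}{d}(\mu_i - X_i(j))^2 .
\]
The bracket in the first term collapses to $\tfrac{d-k}{k}(X_i(j)-\mu_i)$, so the first term becomes $\tfrac{(d-k)^2}{dk}(X_i(j)-\mu_i)^2$; adding the second term and factoring out $\tfrac{d-k}{d}$ yields $\tfrac{d-k}{d}\!\left(\tfrac{d-k}{k}+1\right)(X_i(j)-\mu_i)^2 = \tfrac{d-k}{k}(X_i(j)-\mu_i)^2$, which is exactly what \eqref{eq:MSE_general_2} claims once summed over $i,j$.

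A shortcut (worth a remark but not a substitute for the computation above) is that the per-coordinate marginal of $Y_i(j)$ coincides with the one in \eqref{eq:randomized_protocol} with the uniform choice $p_{ij} = k/d$; substituting this into \eqref{eq:MSE_general} of Lemma~\ref{lem:MSE} gives $(\tfrac{d}{k}-1)(X_i(j)-\mu_i)^2 = \tfrac{d-k}{k}(X_i(j)-\mu_i)^2$, matching the claim. This also highlights why the fixed-support protocol offers no MSE improvement over the uniform variable-support protocol, only a communication-cost advantage.
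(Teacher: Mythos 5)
Your proposal is correct and follows essentially the same route as the paper's proof in Appendix~\ref{sec:app:alternative_protocol}: invoke Lemma~\ref{lem:general_MSE}, expand the squared norm coordinatewise by linearity of expectation, and compute the per-coordinate second moment, which collapses to $\tfrac{d-k}{k}(X_i(j)-\mu_i)^2$. The only cosmetic difference is that you obtain the marginal inclusion probability $k/d$ directly by symmetry, whereas the paper writes out the sum over subsets via the binomial-coefficient ratios $\binom{d-1}{k-1}/\binom{d}{k}$ and $\binom{d-1}{k}/\binom{d}{k}$ --- the same quantities; your explicit remark that intra-node coordinate dependence is harmless is a welcome clarification the paper leaves implicit.
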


\section{Communication Protocols}
\label{sec:comm}

Having defined the encoding protocols $\alpha$, we need to specify the way the encoded vectors $Y_i = \alpha(X_i)$, for $i=1,2,\dots,n$, are communicated to the server. Given a specific  {\em communication protocol} $\beta$, we write $\beta(Y_i)$ to denote the (expected) number of bits that are communicated by node $i$ to the server. Since $Y_i = \alpha(X_i)$ is in general not deterministic, $\beta(Y_i)$ can be a random variable.

\begin{definition}[Communication Cost] 
The {\em communication cost} of communication protocol $\beta$ under randomized encoding $\alpha$ is the total expected number of bits transmitted to the server: 
\begin{equation}
\label{eq:989d8dd} 
C_{\alpha,\beta}(X_1,\dots,X_n) = \EE{\alpha}{\sum_{i=1}^n  \beta(\alpha(X_i))}. 
\end{equation}
\end{definition}

Given $Y_i$, a good communication protocol is able to encode $Y_i=\alpha(X_i)$ using a few bits only.  Let $r$ denote the number of bits used to represent a floating point number.  Let $\bar{r}$ be the  the number of bits representing $\mu_i$.

In the rest of this section we describe several communication protocols $\beta$ and calculate their communication cost.

\subsection{Naive} 
Represent $Y_i=\alpha(X_i)$ as $d$ floating point numbers. Then for all encoding protocols $\alpha$ and all $i$ we have $\beta(\alpha(X_i)) = dr$, whence \[C_{\alpha,\beta} = \EE{\alpha}{\sum_{i=1}^n \beta(\alpha(X_i))} = ndr.\]

\subsection{Varying-length}
We will use a single variable for every element of the vector $Y_i$, which does not have constant size. The first bit decides whether the value represents $\mu_i$ or not. If yes, end of variable, if not, next $r$ bits represent the value of $Y_i(j)$. In addition, we need to communicate $\mu_i$, which takes $\bar r$ bits\footnote{The distinction here is because $\mu_i$ can be chosen to be data independent, such as $0$, so we don't have to communicate anything (i.e., $\bar r = 0$)}. We thus have
\begin{equation}
\label{eq:s09y09hjfff}
\beta(\alpha(X_i)) = \bar r + \sum_{j=1}^d \left(1_{(Y_i(j) = \mu_i)} + (r+1) \times 1_{(Y_i(j) \neq \mu_i)}\right),
\end{equation}
where $1_{e}$ is the indicator function of event $e$. The expected number of bits communicated is given by

\begin{align*} 
\label{eq:9y80s9y0sd}
C_{\alpha,\beta} = \EE{\alpha}{\sum_{i=1}^n\beta(\alpha(X_i)))} &\overset{\eqref{eq:s09y09hjfff}}{=} n \bar r +\sum_{i=1}^n \sum_{j=1}^d \left(1-p_{ij} + (r+1) p_{ij} \right) \\
&= n \bar r +\sum_{i=1}^n \sum_{j=1}^d \left(1 + r p_{ij} \right)
\end{align*}
In the special case when $p_{ij} = p>0 $ for all $i,j$, we get
$$ C_{\alpha,\beta} = n(\bar{r} + d + pdr). $$

\subsection{Sparse Communication Protocol for Encoder~\eqref{eq:randomized_protocol}} 
We can represent $Y_i$ as a sparse vector; that is, a list of pairs $(j, Y_i(j))$ for which $Y_i(j) \neq \mu_i$. The number of bits to represent each pair is $\lceil \log(d)\rceil + r$. Any index not found in the list, will be interpreted by server as having value $\mu_i$. Additionally, we have to communicate the value of $\mu_i$ to the server, which takes $\bar r$ bits. We assume that the value $d$, size of the vectors, is known to the server. Hence,
\[\beta(\alpha(X_i)) = \bar r + \sum_{j=1}^d 1_{(Y_i(j) \neq \mu_i)} \times \left( \lceil \log d \rceil + r \right) . \]
Summing up through $i$ and  taking expectations, the the communication cost is given by
\begin{equation}
\label{eq:s09y09y9ff}
C_{\alpha,\beta} = \EE{\alpha}{\sum_{i=1}^n\beta(\alpha(X_i))} = n \bar{r} + (\lceil \log d \rceil + r)  \sum_{i=1}^n \sum_{j=1}^d p_{ij}.
\end{equation}
In the special case when $p_{ij} = p>0 $ for all $i,j$, we get
$$ C_{\alpha,\beta} = n\bar{r} + (\lceil \log d \rceil + r) ndp. $$

\begin{remark}
A practical improvement upon this could be to (without loss of generality) assume that the pairs $(j, Y_i(j))$ are ordered by $j$, i.e., we have $\{ (j_s, Y_i(j_s)) \}_{s=1}^k$ for some $k$ and $j_1 < j_2 < \dots < j_k$. Further, let us denote $j_0 = 0$. We can then use a variant of variable-length quantity \cite{wikiVLQ} to represent the set $\{ (j_s - j_{s-1}, Y_i(j_s)) \}_{s=1}^k$. With careful design one can hope to reduce the $\log(d)$ factor in the average case. Nevertheless, this does not improve the worst case analysis we focus on in this paper, and hence we do not delve deeper in this.
\end{remark}

\subsection{Sparse Communication Protocol for Encoder \eqref{eq:randomized_protocol_2}} 

We now describe a sparse  communication protocol compatible only with fixed length encoder defined in \eqref{eq:randomized_protocol_2}. Note that subset selection can  be compressed in the form of a random seed, letting us avoid the $\log(d)$ factor in \eqref{eq:s09y09y9ff}. This includes the protocol defined in \eqref{eq:randomized_protocol_2} but also \eqref{eq:randomized_protocol} with uniform probabilities $p_{ij}$.

In particular, we can represent $Y_i$ as a sparse vector containing the list of the values for which $Y_i(j) \neq \mu_i$, ordered by $j$. Additionally, we need to communicate  the value  $\mu_i$ (using $\bar r$ bits) and a random seed (using $\bar r_s$ bits), which can be used to reconstruct the indices $j$, corresponding to the communicated values. Note that for any fixed $k$ defining protocol \eqref{eq:randomized_protocol_2}, we have $|S_i|=k$. Hence, communication cost is deterministic:
\begin{equation}
\label{eq:beta_deterministic_sparse}
C_{\alpha, \beta} = \sum_{i=1}^n \beta(\alpha(X_i)) = n (\bar r + \bar r_s) + nkr.
\end{equation}

In the case of the variable-size-support encoding protocol \eqref{eq:randomized_protocol} with $p_{ij} = p > 0$ for all $i, j$, the  sparse communication protocol described here yields expected communication cost
\begin{equation}
\label{eq:exp_beta_deterministic_sparse}
C_{\alpha,\beta} = \EE{\alpha}{\sum_{i=1}^n\beta(\alpha(X_i))} = n (\bar r + \bar r_s) + ndpr.
\end{equation}

\subsection{Binary}
If the elements of $Y_i$ take only two different values, $Y_i^{min}$ or $Y_i^{max}$, we can use a {\em binary communication protocol}. That is, for each node $i$, we communicate the values of $Y_i^{min}$ and $Y_i^{max}$ (using $2r$ bits), followed by a single bit per element of the array indicating whether $Y_i^{max}$ or $Y_i^{min}$ should be used. The resulting (deterministic) communication cost is 
\begin{equation}
\label{eq:binary_comm_cost}
C_{\alpha, \beta} = \sum_{i=1}^n \beta(\alpha(X_i)) = n (2 r) + nd.
\end{equation}

\subsection{Discussion}
In the above, we have presented several communication protocols of different complexity. However, it is not possible to claim any of them is the most efficient one. Which communication protocol is the best, depends on the specifics of the used encoding protocol. Consider the extreme case of encoding protocol \eqref{eq:randomized_protocol} with $p_{ij} = 1$ for all $i, j$. The naive communication protocol is clearly the most efficient, as all other protocols need to send some additional information.

However, in the interesting case when we consider small communication budget, the sparse communication protocols are the most efficient. Therefore, in the following sections, we focus primarily on optimizing the performance using these protocols.

\section{Examples}
\label{sec:examples}

In this section, we highlight on  several instantiations of our protocols, recovering existing techniques and formulating novel ones. We comment on the resulting trade-offs between  communication cost and estimation error.

\subsection{Binary Quantization}

We start by recovering an existing method, which turns every element of the vectors $X_i$ into a particular binary representation.

\begin{example}
\label{ex:suresh}
If we set the parameters of protocol \eqref{eq:randomized_protocol} as $\mu_i = X_i^{min}$ and $p_{ij} = \frac{X_i(j) - X_i^{min}}{\Delta_i}$, where $\Delta_i\eqdef X_i^{max} - X_i^{min}$ (assume, for simplicity, that $\Delta_i \neq 0$), we exactly recover the  quantization algorithm proposed in \cite{Distributed_mean}: 
\begin{equation}
\label{eq:protocol_special_case_felix}
Y_{i}(j) = 
\begin{cases}
X_i^{max} & \quad \text{with probability} \quad \frac{X_i(j) - X_i^{min}}{\Delta_i}, \\
X_i^{min} & \quad \text{with probability} \quad \frac{X_i^{max} - X_i(j)}{\Delta_i}.
\end{cases}
\end{equation}

Using the formula \eqref{eq:MSE_general} for the encoding protocol $\alpha$, we get
\begin{align*}
MSE_\alpha &= \frac{1}{n^2} \sum_{i=1}^n \sum_{j=1}^d \frac{X_i^{max} - X_i(j)}{X_i(j) - X_i^{min}} \left( X_i(j) - X_i^{min} \right)^2 \leq \frac{d}{2 n} \cdot \frac1n \sum_{i=1}^n \|X_i\|^2.
\end{align*}
This exactly recovers the MSE bound established in \cite[Theorem 1]{Distributed_mean}. Using the binary communication protocol yields the communication cost of $1$ bit per element if $X_i$, plus a two real-valued scalars \eqref{eq:binary_comm_cost}.
\end{example}

\begin{remark}
\label{rem:rotation_quantization_protocol}
If we use the above protocol jointly with randomized linear encoder and decoder (see Example~\ref{ex:linear_encoder}), where the linear transform is the randomized Hadamard transform, we recover the method described in \cite[Section 3]{Distributed_mean} which yields improved $MSE_\alpha = \frac{2\log d + 2}{n} \cdot \frac1n \sum_{i=1}^n \|X_i\|^2$ and can be implemented in $\mathcal{O} (d \log d)$ time.
\end{remark}

\subsection{Sparse Communication Protocols}

Now we move to comparing the communication costs and estimation error of various instantiations of the encoding protocols, utilizing the deterministic sparse communication protocol and uniform probabilities.

For the remainder of this section, let us only consider instantiations of our protocol where $p_{ij} = p > 0$ for all $i, j$, and assume that the node centers are set to the vector averages, i.e., $\mu_i = \frac1d \sum_{j=1}^d X_i(j)$. Denote $R = \frac1n \sum_{i=1}^n \sum_{j=1}^d (X_i(j) - \mu_i)^2$. For simplicity, we also assume that $|S| = nd$, which is what we can in general expect without any prior knowledge about the vectors $X_i$. 

The properties of the following examples follow from Equations~\eqref{eq:MSE_general} and \eqref{eq:exp_beta_deterministic_sparse}. When considering the communication costs of the protocols, keep in mind that the trivial benchmark is $C_{\alpha, \beta} = ndr$, which is achieved by simply sending the vectors unmodified. Communication cost of $C_{\alpha, \beta} = nd$ corresponds to the interesting special case when we use (on average) one bit per element of each $X_i$.

\begin{example}[Full communication]
\label{ex:full}
If we choose $p = 1$, we get
$$ C_{\alpha, \beta} = n(\bar r_s + \bar r) + ndr, \qquad MSE_{\alpha, \gamma} = 0. $$
In this case, the encoding protocol is lossless, which ensures $MSE = 0$. Note that in this case, we could get rid of the $n(\bar r_s + \bar r)$ factor by using naive communication protocol.
\end{example}

\begin{example}[Log MSE]
\label{ex:log_MSE}
If we choose $p = 1 / \log d$, we get
$$ C_{\alpha, \beta} = n(\bar r_s + \bar r) + \frac{ndr}{\log d}, \qquad MSE_{\alpha, \gamma} = \frac{\log(d) - 1}{n} R. $$
This protocol order-wise matches the $MSE$ of the method in Remark~\ref{rem:rotation_quantization_protocol}. However, as long as $d > 2^r$, this protocol attains this error with \emph{smaller} communication cost. In particular, this is on expectation \emph{less} than a single bit per element of $X_i$. Finally, note that the factor $R$ is always smaller or equal to the factor $\frac1n \sum_{i=1}^n \| X_i \|^2$ appearing in Remark~\ref{rem:rotation_quantization_protocol}.
\end{example}

\begin{example}[1-bit per element communication]
\label{ex:1_bit}
If we choose $p = 1 / r$, we get
$$ C_{\alpha, \beta} = n(\bar r_s + \bar r) + nd, \qquad MSE_{\alpha, \gamma} = \frac{r - 1}{n} R. $$
This protocol communicates on expectation single bit per element of $X_i$ (plus additional $\bar r_s + \bar r$ bits per client), while attaining bound on $MSE$ of $\mathcal{O}(r/n)$. To the best of out knowledge, this is the first method to attain this bound without additional assumptions. \end{example}

\begin{example}[Alternative 1-bit per element communication]
\label{ex:1_bit_alternative}
If we choose $p = \frac{d - \bar r_s - \bar r}{dr}$, we get
$$ C_{\alpha, \beta} =  nd, \qquad MSE_{\alpha, \gamma} = \frac{\frac{dr}{d - \bar r_s - \bar r} - 1}{n} R. $$
This alternative protocol attains on expectation exactly single bit per element of $X_i$, with (a slightly more complicated) $\mathcal{O} (r/n)$ bound on $MSE$.
\end{example}

\begin{example}[Below 1-bit communication]
\label{ex:below_1_bit}
If we choose $p = 1 / d$, we get
$$ C_{\alpha, \beta} = n(\bar r_s + \bar r) + nr, \qquad MSE_{\alpha, \gamma} = \frac{d - 1}{n} R. $$
This protocol attains the MSE of protocol in Example~\ref{ex:suresh} while at the same time communicating on average significantly less than a single bit per element of $X_i$.
\end{example}

\begin{table}
\begin{center}
\begin{tabular}{|c|c|c|c|}
\hline
Example & $p$ & $C_{\alpha, \beta}$ & $MSE_{\alpha, \gamma}$ \\
\hline
Example \ref{ex:full} (Full) & $1$ & $ndr$ & $0$ \\
Example \ref{ex:log_MSE} (Log $MSE$) & $1 / \log d$ & $n(\bar r_s + \bar r) + \frac{ndr}{\log d}$ & $(\log(d) - 1) \tfrac{R}{n}$ \\
Example \ref{ex:1_bit} ($1$-bit) & $1 / r$ & $n(\bar r_s + \bar r) + nd$ & $(r - 1) \tfrac{R}{n}$ \\
Example \ref{ex:below_1_bit} (below $1$-bit) & $1 / d$ & $n(\bar r_s + \bar r) + nr$ & $(d - 1) \tfrac{R}{n}$ \\
\hline
\end{tabular}
\end{center}
\caption{Summary of achievable communication cost and estimation error, for various choices of probability $p$.}
\label{tbl:main1}
\end{table}

We summarize these examples in Table~\ref{tbl:main1}.

Using the deterministic sparse protocol, there is an obvious lower bound on the communication cost --- $n(\bar r_s + \bar r)$. We can bypass this threshold by using the sparse protocol, with a data-independent choice of $\mu_i$, such as $0$, setting $\bar r = 0$. By setting $p = \epsilon / d( \lceil \log d \rceil + r)$, we get arbitrarily small expected communication cost of $C_{\alpha, \beta} = \epsilon$, and the cost of exploding estimation error $MSE_{\alpha, \gamma} = \mathcal{O}(1 / \epsilon n)$.

Note that all of the above examples have random communication costs. What we present is the \emph{expected} communication cost of the protocols. All the above examples can be modified to use the encoding protocol with fixed-size support defined in $\eqref{eq:randomized_protocol_2}$ with the parameter $k$ set to the value of $pd$ for corresponding $p$ used above, to get the same results. The only practical difference is that the communication cost will be deterministic for each node, which can be useful for certain applications.

\section{Optimal Encoders} \label{sec:opt}

Here we consider $(\alpha, \beta, \gamma)$, where $\alpha=\alpha(p_{ij},\mu_i)$ is the encoder defined in \eqref{eq:randomized_protocol}, $\beta$ is the associated the sparse communication protocol, and $\gamma$ is the averaging decoder.  Recall from Lemma~\ref{eq:MSE_general} and \eqref{eq:s09y09y9ff}
that the mean square error and communication cost are given by:
\begin{equation}
\label{eq:MSE+C}
MSE_{\alpha,\gamma} = \frac{1}{n^2} \sum_{i,j} \left( \frac{1}{p_{ij}} - 1 \right) \left( X_i(j) - \mu_i \right)^2 , \quad C_{\alpha,\beta} =  n \bar{r} + (\lceil \log d \rceil + r)  \sum_{i=1}^n \sum_{j=1}^d p_{ij}.
\end{equation}
 
Having these closed-form formulae as functions of the parameters $\{p_{ij}, \mu_i\}$, we can now ask questions such as:
\begin{enumerate}
\item Given a communication budget, which encoding protocol has the smallest mean squared error?
\item Given a bound on the mean squared error, which encoder  suffers the minimal communication cost? 
\end{enumerate}
Let us now address the first question; the second question can  be handled in a similar fashion.  In particular, consider the optimization problem
\begin{eqnarray}
\text{minimize}& & \sum_{i,j} \left(\frac{1}{p_{ij}} -1\right) (X_i(j)-\mu_i)^2 \notag \\
\text{subject to} && \mu_i \in \R, \quad i=1,2,\dots,n \notag\\
&& \sum_{i,j} p_{ij} \leq B \label{eq:optimalCCyyy}\\
&& 0< p_{ij} \leq 1, \quad i=1,2,\dots,n; \quad j=1,2,\dots,d,
\end{eqnarray}
where $B>0$ represents a  bound on the part of the total communication cost in \eqref{eq:MSE+C} which depends on the choice of the probabilities $p_{ij}$.

Note that while the constraints in \eqref{eq:optimalCCyyy} are convex (they are linear), the objective is not jointly convex in $\{p_{ij},\mu_i\}$. However, the objective is convex in $\{p_{ij}\}$ and convex in $\{\mu_i\}$. This suggests a simple {\em alternating minimization} heuristic for solving the above problem: 
\begin{enumerate}
\item  Fix the probabilities and optimize over the node centers,
 \item Fix the node centers and optimize over probabilities.
\end{enumerate}

These two steps are  repeated until a suitable convergence criterion is reached. Note that the first step has a closed form solution. Indeed, the problem decomposes across the node centers to $n$ univariate unconstrained convex quadratic minimization problems, and the solution is given by
\begin{equation}\label{eq:opt_node_centers_given_prob}\mu_i = \frac{\sum_j w_{ij} X_i(j)}{\sum_{j} w_{ij}},\qquad w_{ij}\eqdef \frac{1}{p_{ij}}-1.\end{equation}
The second step does not  have a closed form solution in general; we provide an analysis of  this step in Section~\ref{sec:0y09fyhff}.

\begin{remark}Note that the upper bound $\sum_{i,j}(X_i(j)-\mu_i)^2 / p_{ij}$ on the objective is jointly convex in $\{p_{ij},\mu_i\}$. We may therefore instead optimize this upper bound by a suitable convex optimization algorithm.
\end{remark}

\begin{remark} An alternative and a more practical model to \eqref{eq:optimalCCyyy} is to choose per-node budgets $B_1,\dots,B_n$ and require $\sum_{j} p_{ij} \leq B_i$ for all $i$. The problem becomes separable across the nodes, and can therefore be solved by each node independently. If we set $B=\sum_i B_i$, the optimal solution obtained this way will lead to MSE which is lower bpunded by the MSE obtained through  \eqref{eq:optimalCCyyy}.
\end{remark}

\subsection{Optimal Probabilities for Fixed Node Centers}\label{sec:0y09fyhff}

Let the node centers $\mu_i$ be fixed. Problem \eqref{eq:optimalCCyyy} (or, equivalently, step 2 of the alternating minimization method described above)  then takes the form 
\begin{eqnarray}
\text{minimize}& & \sum_{i,j} \frac{(X_i(j)-\mu_i)^2}{p_{ij}} \notag \\
\text{subject to} && \sum_{i,j} p_{ij} \leq B \label{eq:optimalCCxxx}\\
&& 0< p_{ij} \leq 1, \quad  i=1,2,\dots n, \quad j=1,2,\dots,d \notag.
\end{eqnarray}

Let $S = \{(i,j)\;:\; X_i(j) \neq \mu_i\}$. Notice that as long as $B\geq |S|$, the optimal solution is to set $p_{ij}=1$ for all $(i,j)\in S$ and $p_{ij}= 0$ for all $(i,j)\notin S$.\footnote{We interpret $0/0$ as $0$ and do not worry about infeasibility. These issues can be properly formalized by allowing $p_{ij}$ to be zero in the encoding protocol and in \eqref{eq:optimalCCxxx}. However, handling this singular situation requires a notational overload which we are not willing to pay.} In such a case, we have $MSE_{\alpha,\gamma} = 0.$ Hence, we can without loss of generality assume that $B \leq |S|$.

While we are not able to derive a closed-form solution to this problem, we can formulate upper and lower bounds on the optimal estimation error, given a bound on the communication cost formulated via $B$.

\begin{theorem}[MSE-Optimal Protocols subject to a Communication Budget]
\label{thm:main}
Consider problem \eqref{eq:optimalCCxxx} and fix any $B\leq |S|$. Using the sparse communication protocol $\beta$, the optimal encoding protocol $\alpha$ has communication complexity \begin{equation}\label{eq:9sy09yhiffs}C_{\alpha,\beta} = n\bar r + (\lceil \log d\rceil + r )B,\end{equation} and the mean squared error satisfies the bounds
\begin{equation}\label{eq:s0y09yhff} \left(\frac{1}{B}-1\right)\frac{R}{n} \leq MSE_{\alpha,\gamma} \leq \left(\frac{|S|}{B}-1\right)\frac{R}{n},\end{equation}
where $R = \frac{1}{n}\sum_{i=1}^n\sum_{j=1}^d (X_i(j)-\mu_i)^2 = \frac{1}{n}\sum_{i=1}^n \|X_i - \mu_i 1\|^2$.
Let $a_{ij}=|X_{i}(j)-\mu_i|$ and $W=\sum_{i,j} a_{ij}$. If, moreover, $B \leq \sum_{(i,j)\in S} a_{ij}/ \max_{(i,j)\in S} a_{ij}$ (which is true, for instance, in the ultra-low communication regime with $B\leq 1$), then 
\begin{equation}\label{eq:s08y09fh9ff}MSE_{\alpha,\gamma}  = \frac{W^2}{n^2 B}- \frac{R}{n}.\end{equation}
\end{theorem}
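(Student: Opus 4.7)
The plan is to exploit the separable structure of \eqref{eq:optimalCCxxx}. First I would reduce the problem: at any optimum the budget is saturated, $\sum_{i,j} p_{ij}=B$, because increasing any $p_{ij}$ with $(i,j)\in S$ strictly decreases $a_{ij}^2/p_{ij}$, and this is possible whenever $\sum p_{ij}<B\leq|S|$. Moreover, for $(i,j)\notin S$ the objective is independent of $p_{ij}$, so using the singular convention of the footnote in the excerpt we may set $p_{ij}=0$ there. Substituting $\sum_{i,j} p_{ij}=B$ into the sparse-protocol formula \eqref{eq:s09y09y9ff} then immediately yields the communication-cost identity \eqref{eq:9sy09yhiffs}.

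Next I would prove the two-sided bound \eqref{eq:s0y09yhff}. For the upper bound, the feasible choice $p_{ij}=B/|S|$ for $(i,j)\in S$ (feasible since $B\leq|S|$) plugged into \eqref{eq:MSE+C}, together with $\sum_{(i,j)\in S} a_{ij}^2 = nR$, gives $MSE=\bigl(\tfrac{|S|}{B}-1\bigr)\tfrac{R}{n}$. For the lower bound, Cauchy-Schwarz in the form $W^2 = \bigl(\sum a_{ij}\bigr)^2 = \bigl(\sum \sqrt{p_{ij}}\cdot a_{ij}/\sqrt{p_{ij}}\bigr)^2 \leq \bigl(\sum p_{ij}\bigr)\bigl(\sum a_{ij}^2/p_{ij}\bigr) \leq B\sum a_{ij}^2/p_{ij}$ gives $\sum a_{ij}^2/p_{ij}\geq W^2/B$ for every feasible $p$, whence $MSE\geq \tfrac{W^2}{n^2B}-\tfrac{R}{n}$. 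Since all $a_{ij}\geq 0$, $W^2=\bigl(\sum_{(i,j)\in S} a_{ij}\bigr)^2\geq \sum_{(i,j)\in S} a_{ij}^2 = nR$, so $MSE\geq \tfrac{R}{nB}-\tfrac{R}{n}=\bigl(\tfrac{1}{B}-1\bigr)\tfrac{R}{n}$.

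Finally, for the closed form \eqref{eq:s08y09fh9ff} I would minimize $\sum a_{ij}^2/p_{ij}$ subject only to $\sum p_{ij}=B$, temporarily dropping the box constraint. KKT yields $p^\star_{ij}=a_{ij}/\sqrt\lambda$, and the tight budget fixes $\sqrt\lambda=W/B$, so $p^\star_{ij}=Ba_{ij}/W$ with optimal value $W^2/B$; this matches the Cauchy-Schwarz bound, certifying optimality. The hypothesis $B\leq \sum_{(i,j)\in S}a_{ij}/\max_{(i,j)\in S}a_{ij}=W/\max a_{ij}$ is exactly the requirement $p^\star_{ij}\leq 1$, so $p^\star$ is feasible in \eqref{eq:optimalCCxxx} and hence globally optimal there; substituting into $MSE=\tfrac{1}{n^2}\bigl(\sum a_{ij}^2/p_{ij}-\sum a_{ij}^2\bigr)$ produces the claimed $MSE=\tfrac{W^2}{n^2B}-\tfrac{R}{n}$. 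The only real obstacle is the careful handling of indices with $a_{ij}=0$ and the strict positivity constraint $p_{ij}>0$, both of which are absorbed by the singular convention the paper already adopts; the analytical heart of the proof is just Cauchy-Schwarz, used once for the general lower bound and once in its equality case for the closed form.
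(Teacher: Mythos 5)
Your proposal is correct and follows the same overall architecture as the paper's proof: the same feasible point $p_{ij}=B/|S|$ for the upper bound, and the same optimal probabilities $p_{ij}^\star = B a_{ij}/W$ for the closed form under the condition $B\le W/\max_{(i,j)\in S}a_{ij}$. The one place you genuinely diverge is the lower bound. The paper obtains it by dropping the constraints $p_{ij}\le 1$ and \emph{asserting} that the relaxed optimum satisfies $a_{ij}/p_{ij}=\theta$ (a KKT stationarity claim left unproved), then evaluating the objective there. You instead prove, via Cauchy--Schwarz in the form $W^2=\bigl(\sum \sqrt{p_{ij}}\cdot a_{ij}/\sqrt{p_{ij}}\bigr)^2\le \bigl(\sum p_{ij}\bigr)\bigl(\sum a_{ij}^2/p_{ij}\bigr)\le B\sum a_{ij}^2/p_{ij}$, that $\sum a_{ij}^2/p_{ij}\ge W^2/B$ holds for \emph{every} feasible $p$, and then use the equality case to certify that $p^\star$ is globally optimal once it is feasible. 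This is a cleaner and more self-contained argument: it avoids having to justify that the stationary point of the relaxation is its global minimum, and it delivers the optimality certificate for \eqref{eq:s08y09fh9ff} as a by-product. Your explicit verification that the budget is saturated at the optimum (needed for the communication-cost identity \eqref{eq:9sy09yhiffs}) is also a detail the paper leaves implicit. The remaining delicacies --- the strict positivity constraint $p_{ij}>0$ and indices with $a_{ij}=0$ --- are handled exactly as the paper does, by its stated $0/0$ convention, so nothing is missing.
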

\begin{proof}
Setting $p_{ij} = B/|S|$ for all $(i,j)\in S$ leads to a feasible solution of \eqref{eq:optimalCCxxx}. In view of \eqref{eq:MSE+C}, one then has 
\[MSE_{\alpha,\gamma} = \frac{1}{n^2} \left(\frac{|S|}{B}-1\right)\sum_{(i,j)\in S}  \left( X_i(j) - \mu_i \right)^2 =  \left(\frac{|S|}{B}-1\right) \frac{R}{n},\]
where 
$R = \frac{1}{n}\sum_{i=1}^n\sum_{j=1}^d (X_i(j)-\mu_i)^2 = \frac{1}{n}\sum_{i=1}^n \|X_i - \mu_i 1\|^2$.

If we relax the problem by removing the constraints $p_{ij}\leq 1$, the optimal solution satisfies  $a_{ij}/p_{ij} = \theta>0$  for all $(i,j)\in S$.  At optimality the bound involving $B$ must be tight, which leads to
$\sum_{(i,j)\in S} a_{ij}/\theta = B$, whence $\theta = \tfrac{1}{B}\sum_{(i,j)\in S} a_{ij}$. So, $p_{ij} = a_{ij}B/\sum_{(i,j)\in S}a_{ij}$. The optimal MSE therefore satisfies the lower bound
\[MSE_{\alpha,\gamma} \geq \frac{1}{n^2} \sum_{(i,j)\in S} \left(\frac{1}{p_{ij}}-1\right) \left( X_i(j) - \mu_i \right)^2 =  \frac{1}{n^2 B}W^2 - \frac{R}{n},\]
where $W \eqdef \sum_{(i,j)\in S} a_{ij} \geq \left(\sum_{(i,j)\in S} a_{ij}^2\right)^{1/2} = (nR)^{1/2}$. Therefore,
$MSE_{\alpha,\gamma} \geq \left(\frac{1}{B}-1\right) \frac{R}{n}$. If $B \leq \sum_{(i,j)\in S}a_{ij} / \max_{(i,j)\in S} a_{ij}$, then $p_{ij}\leq 1$ for all $(i,j)\in S$, and hence we have optimality. (Also note that, by Cauchy-Schwarz inequality,  $W^2\leq n R |S|$.)
\end{proof}


\subsection{Trade-off Curves}

To illustrate the trade-offs between communication cost and estimation error (MSE) achievable by the protocols discussed in this section, we present simple numerical examples  in Figure~\ref{fig:uniform_vs_optimal}, on three synthetic data sets with  $n=16$ and $d=512$. We choose an array of values  for $B$, directly bounding the communication cost via \eqref{eq:9sy09yhiffs}, and evaluate the $MSE$ \eqref{eq:MSE_general} for three encoding protocols (we use the sparse communication protocol and averaging decoder). All these protocols have the same communication cost, and only differ in the selection of the parameters $p_{ij}$ and $\mu_i$. In particular, we consider 
\begin{itemize}
\item[(i)] uniform probabilities $p_{ij}=p>0$ with average node centers $\mu_i = \frac{1}{d}\sum_{j=1}^d X_i(j)$ (blue dashed line), 
\item[(ii)]  optimal probabilities $p_{ij}$ with average node centers $\mu_i = \frac{1}{d}\sum_{j=1}^d X_i(j)$ (green dotted line), and 
\item[(iii)] optimal probabilities with optimal node centers,  obtained via the alternating minimization approach described above (red solid line).
\end{itemize}

In order to put a scale on the horizontal axis, we assumed that $r = 16$. Note that, in practice, one would choose $r$ to be as small as possible without adversely affecting  the application utilizing our  distributed mean estimation method. The three plots represent $X_i$ with entries drawn in an i.i.d.\ fashion from Gaussian ($\mathcal{N}(0, 1)$), Laplace ($\mathcal{L}(0, 1)$) and chi-squared ($\chi^2(2)$) distributions, respectively. As we can see, in the case of non-symmetric distributions, it is not necessarily optimal to set the node centers to  averages. 

As expected, for fixed node centers, optimizing over probabilities results in improved performance, across the entire trade-off curve. That is, the curve shifts downwards. In the first two plots based on data from symmetric distributions (Gaussian and Laplace), the average node centers are nearly optimal,  which explains why the red solid and green dotted lines coalesce. This can be also established formally. In the third plot, based on the non-symmetric chi-squared data, optimizing over node centers leads to further improvement, which gets more pronounced with increased communication budget. It is possible to generate data where the difference between any pair of the three trade-off curves becomes arbitrarily large. 

Finally, the black cross represents performance of the quantization protocol from Example~\ref{ex:suresh}. This approach appears as a single point in the trade-off space due to lack of any parameters to be fine-tuned.

\begin{figure}
\centering
\includegraphics[width=0.32\linewidth]{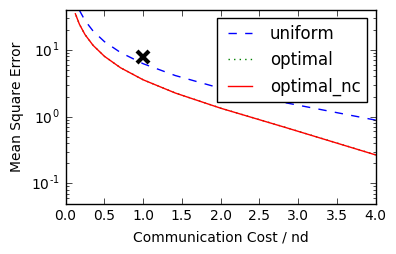}
\includegraphics[width=0.32\linewidth]{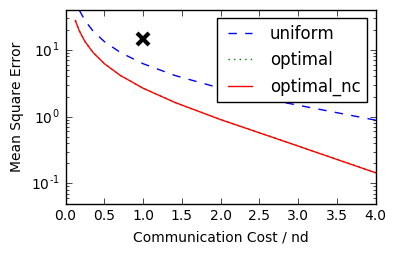}
\includegraphics[width=0.32\linewidth]{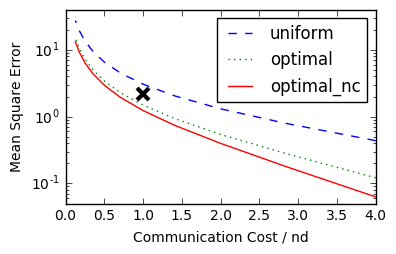}
\caption{{\em Trade-off curves} between communication cost and estimation error (MSE) for four protocols. The plots correspond to vectors $X_i$ drawn in an i.i.d.\ fashion from Gaussian, Laplace and $\chi^2$ distributions, from left to right. The black cross marks the performance of binary quantization (Example~\ref{ex:suresh}).}
\label{fig:uniform_vs_optimal}
\end{figure}

%
%
%

\section{Further Considerations} \label{sec:further}

In this section we outline further ideas worth consideration. However, we leave a detailed analysis to future work.

\subsection{Beyond Binary Encoders}

We can generalize the binary encoding protocol~\eqref{eq:randomized_protocol} to a $k$-ary protocol. To illustrate the concept without unnecessary notation overload, we present only the ternary (i.e., $k=3$) case.

Let the collection of parameters $\{p'_{ij}, p''_{ij}, \bar{X}'_i, \bar{X}''_i\}$ define an encoding protocol $\alpha$ as follows:

\begin{equation}
\label{eq:randomized_protocol_3}
Y_{i}(j) = 
\begin{cases}
\bar X'_i & \quad \text{with probability} \quad p'_{ij}, \\
\bar X''_i & \quad \text{with probability} \quad p''_{ij}, \\
\frac{1}{1 - p'_{ij} - p''_{ij}} \left( X_i(j) - p'_{ij} \bar X'_i - p''_{ij} \bar X''_i \right) & \quad \text{with probability} \quad 1 - p'_{ij} - p''_{ij}.
\end{cases}
\end{equation}

It is straightforward to generalize Lemmas~\ref{lem:unbiasedness} and \ref{lem:MSE} to this case. We omit the proofs for brevity.

\begin{lemma}[Unbiasedness] 
\label{lem:unbiasedness_3} 
The encoder $\alpha$ defined in \eqref{eq:randomized_protocol_3} is unbiased. That is,  $\EE{\alpha}{\alpha(X_i)} = X_i$ for all $i$. As a result, $Y$ is an unbiased estimate of the true average: $\EE{\alpha}{Y} = X$.
\end{lemma}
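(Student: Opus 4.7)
The plan is to mirror the proof of Lemma~\ref{lem:unbiasedness} almost verbatim, since the only change is that the encoder now has three atoms in its support rather than two. First I would reduce the claim about $Y$ to a per-coordinate, per-node claim. Because $Y(j) = \tfrac{1}{n}\sum_{i=1}^n Y_i(j)$ and $X(j) = \tfrac{1}{n}\sum_{i=1}^n X_i(j)$, linearity of expectation shows that establishing $\EE{\alpha}{Y_i(j)} = X_i(j)$ for every $i,j$ immediately yields both conclusions of the lemma (first $\EE{\alpha}{\alpha(X_i)} = X_i$ coordinate-wise, then $\EE{\alpha}{Y} = X$).

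Next I would compute $\EE{\alpha}{Y_i(j)}$ directly from the three-case definition \eqref{eq:randomized_protocol_3}. The key observation is that the third branch has been engineered precisely to cancel the contributions of the first two. Concretely,
\begin{align*}
\EE{\alpha}{Y_i(j)} &= p'_{ij}\bar X'_i + p''_{ij}\bar X''_i \\
&\quad + (1-p'_{ij}-p''_{ij})\cdot\frac{1}{1-p'_{ij}-p''_{ij}}\bigl(X_i(j) - p'_{ij}\bar X'_i - p''_{ij}\bar X''_i\bigr),
\end{align*}
and the $(1-p'_{ij}-p''_{ij})$ factors cancel, leaving $p'_{ij}\bar X'_i + p''_{ij}\bar X''_i + X_i(j) - p'_{ij}\bar X'_i - p''_{ij}\bar X''_i = X_i(j)$, as required.

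There is essentially no obstacle here; the design of the third atom makes the computation a one-line cancellation. The only minor caveat worth flagging, if one wishes to be rigorous, is an implicit assumption that $0 < 1 - p'_{ij} - p''_{ij} \leq 1$ so that the three branches form a valid probability distribution and the reciprocal in the third case is well defined. Under this standing assumption the proof is complete, and the same strategy generalizes immediately to the $k$-ary encoder: define $k-1$ atoms with arbitrary values and prescribed probabilities, and let the final atom absorb the discrepancy with coefficient $1/(1-\sum_{\ell=1}^{k-1}p^{(\ell)}_{ij})$.
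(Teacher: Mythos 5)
Your proof is correct and is exactly the ``straightforward generalization'' of the proof of Lemma~\ref{lem:unbiasedness} that the paper alludes to before omitting the argument: reduce to $\EE{\alpha}{Y_i(j)}=X_i(j)$ by linearity, then observe that the third branch of \eqref{eq:randomized_protocol_3} is designed so that the weighted sum telescopes to $X_i(j)$. The caveat you flag about $0 < 1 - p'_{ij} - p''_{ij} \leq 1$ is a reasonable rigor point, consistent with the paper's own remark on keeping probabilities strictly positive.
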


\begin{lemma}[Mean Squared Error]
\label{lem:MSE_3}
Let $\alpha = \alpha \left( p'_{ij}, p''_{ij}, \bar{X}'_i, \bar{X}''_i\right) $ be the protocol defined in \eqref{eq:randomized_protocol_3}. Then
\begin{equation*}
MSE_{\alpha}(X_1, \dots, X_n) = \frac{1}{n^2} \sum_{i=1}^n \sum_{j=1}^d \left( p'_{ij} \left( X_i(j) - \bar X'_i \right)^2 + p''_{ij} \left( X_i(j) - \bar X''_i \right)^2 + \left( p'_{ij} \bar X'_i + p''_{ij} \bar X''_i \right)^2 \right).
\end{equation*}
\end{lemma}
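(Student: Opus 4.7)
The plan is to combine Lemma~\ref{lem:general_MSE} with a direct per-coordinate expansion. Since the three outcomes in \eqref{eq:randomized_protocol_3} are drawn independently across coordinates $j$ and nodes $i$, the encoder $\alpha$ is independent; and its unbiasedness is exactly Lemma~\ref{lem:unbiasedness_3}. Lemma~\ref{lem:general_MSE} therefore reduces the task to
$$MSE_{\alpha}(X_1,\dots,X_n) \;=\; \frac{1}{n^2}\sum_{i=1}^n \sum_{j=1}^d \EE{\alpha}{(Y_i(j)-X_i(j))^2},$$
so everything from here on happens at a single pair $(i,j)$.

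For fixed $i,j$ I would split $\EE{\alpha}{(Y_i(j)-X_i(j))^2}$ according to the three cases of \eqref{eq:randomized_protocol_3}. The first two cases contribute $p'_{ij}(X_i(j)-\bar X'_i)^2$ and $p''_{ij}(X_i(j)-\bar X''_i)^2$ immediately. For the third case, the residual value was designed precisely so as to preserve unbiasedness, so its displacement from $X_i(j)$ simplifies nicely:
$$\frac{1}{1-p'_{ij}-p''_{ij}}\bigl(X_i(j)-p'_{ij}\bar X'_i-p''_{ij}\bar X''_i\bigr)-X_i(j) \;=\; \frac{p'_{ij}(X_i(j)-\bar X'_i)+p''_{ij}(X_i(j)-\bar X''_i)}{1-p'_{ij}-p''_{ij}}.$$
Squaring this and multiplying by the third-case probability $1-p'_{ij}-p''_{ij}$ produces a rational expression with that factor as a single denominator.

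The main obstacle is the final algebraic collapse: one has to expand the squared two-term numerator (which produces a cross term $2p'_{ij}p''_{ij}(X_i(j)-\bar X'_i)(X_i(j)-\bar X''_i)$) and combine it with the first two contributions so that the sum reduces to the stated closed form. This parallels the manipulation in the proof of Lemma~\ref{lem:MSE}, where the identity $(1-p_{ij})^2/p_{ij}+(1-p_{ij}) = (1-p_{ij})/p_{ij}$ makes the collapse a one-liner; here the analogous telescoping runs in two variables at once, and one must check that the denominator $1-p'_{ij}-p''_{ij}$ cancels against the matching quadratic piece in the numerator. Once the single-coordinate identity is in hand, summing over $i$ and $j$ and pulling out $1/n^2$ finishes the proof, and the same template extends transparently to general $k$-ary encoders.
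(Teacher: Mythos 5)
Your reduction via Lemma~\ref{lem:general_MSE} and your computation of the third-case displacement
\[
\frac{p'_{ij}\bigl(X_i(j)-\bar X'_i\bigr)+p''_{ij}\bigl(X_i(j)-\bar X''_i\bigr)}{1-p'_{ij}-p''_{ij}}
\]
are both correct. (There is no proof in the paper to compare against: the authors state the generalization is straightforward and omit it.) The genuine gap is exactly the step you defer to at the end: the ``final algebraic collapse'' you promise does not exist. Multiplying the squared displacement by the third-case probability leaves
\[
\frac{\bigl(p'_{ij}(X_i(j)-\bar X'_i)+p''_{ij}(X_i(j)-\bar X''_i)\bigr)^2}{1-p'_{ij}-p''_{ij}},
\]
and this is \emph{not} equal to the term $\bigl(p'_{ij}\bar X'_i+p''_{ij}\bar X''_i\bigr)^2$ in the statement: the denominator does not cancel, and the expression depends on $X_i(j)$ while the stated term does not. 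Concretely, take $n=d=1$, $X_1(1)=1$, $\bar X'_1=\bar X''_1=0$, $p'_{11}=p''_{11}=1/4$; then $Y_1(1)$ is $0$ with probability $1/2$ and $2$ with probability $1/2$, so the true MSE is $1$, while the stated formula gives $1/4+1/4+0=1/2$. The same mismatch shows up as a failed sanity check: setting $p''_{ij}=0$, $p'_{ij}=1-p_{ij}$, $\bar X'_i=\mu_i$ must recover Lemma~\ref{lem:MSE}; your intermediate expression does (via precisely the one-line telescoping you cite from that proof), but the stated formula does not.

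The honest endpoint of your argument is therefore
\[
MSE_{\alpha}=\frac{1}{n^2}\sum_{i=1}^n\sum_{j=1}^d\left(p'_{ij}\bigl(X_i(j)-\bar X'_i\bigr)^2+p''_{ij}\bigl(X_i(j)-\bar X''_i\bigr)^2+\frac{\bigl(p'_{ij}(X_i(j)-\bar X'_i)+p''_{ij}(X_i(j)-\bar X''_i)\bigr)^2}{1-p'_{ij}-p''_{ij}}\right),
\]
which agrees with the lemma's formula only to first order in the probabilities. Either the lemma should be restated with this third term, or protocol \eqref{eq:randomized_protocol_3} is not the one the stated formula describes. In any case, asserting that the expansion ``reduces to the stated closed form'' without carrying it out is where your proof breaks, because the identity you would need to verify is false.
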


We expect the $k$-ary protocol to lead to better (lower) MSE bounds, but at the expense of an increase in communication cost. Whether or not the trade-off offered by $k>2$ is better than that for the $k=2$ case investigated in this paper is an interesting question to consider.


\subsection{Preprocessing via Random Rotations}

Following the idea  proposed in \cite{Distributed_mean}, one can explore an encoding protocol $\alpha_Q$ which arises as the composition of a random rotation, $Q$, applied to $X_i$ for all $i$, followed by the protocol $\alpha$ described in Section~\ref{sec:encode}.  Letting $Z_i = Q X_i$ and $Z=\frac{1}{n}\sum_i Z_i$, we thus have
\[Y_i = \alpha(Z_i), \qquad i=1,2,\dots,n.\]
With this protocol we associate the decoder
$\gamma(Y_1,\dots,Y_n) = \frac{1}{n}\sum_{i=1}^n Q^{-1} Y_i .$

Note that
\begin{eqnarray*}
MSE_{\alpha, \gamma}&=& \Exp{\left\|\gamma(Y_1,\dots,Y_n) - X\right\|^2} \\
&=& \Exp{\left\|Q^{-1}\gamma(Y_1,\dots,Y_n) - Q^{-1}Z\right\|^2} \\
&=& \Exp{\left\|\gamma(\alpha(Z_1),\dots,\alpha(Z_n)) - Z\right\|^2} \\
&=& \Exp{\Exp{\left\|\gamma(\alpha(Z_1),\dots,\alpha(Z_n)) - Z\right\|^2\;|\; Q}}.
\end{eqnarray*}

This approach is motivated by the following observation: a random rotation can be identified by a single random seed, which is easy to  communicate to the server without the need to communicate all floating point entries defining $Q$. So, a random rotation pre-processing step implies only a minor  communication overhead. However, {\em if} the preprocessing step helps to dramatically reduce the MSE, we get an improvement. Note that the inner expectation above is the formula for MSE of our basic encoding-decoding protocol, given that the data is $Z_i = QX_i$ instead of $\{X_i\}$. The outer expectation is over $Q$. Hence,  we would like the to find a mapping $Q$ which tends to transform the data $\{X_i\}$ into new data $\{Z_i\}$ with better MSE, in expectation.

From now on, for simplicity assume the node centers are set to the average, i.e.,  $\bar{Z}_i = \frac{1}{d}\sum_{j=1}^d Z_i(j)$. For any vector $x\in \R^d$, define
$$ \sigma(x) \eqdef \sum_{j=1}^d (x(j) - \bar{x})^2 = \|x-\bar{x}1\|^2, $$
where $\bar{x} = \tfrac{1}{d}\sum_j x(j)$ and $1$ is the vector of all ones. Further, for simplicity assume that $p_{ij}=p$ for all $i,j$. Then using Lemma~\ref{lem:MSE}, we get
$$ MSE = \frac{1-p}{pn^2}  \sum_{i=1}^n \EE{Q}{ \|Z_i - \bar{Z}_i 1\|^2} =  \frac{1-p}{pn^2}  \sum_{i=1}^n \EE{Q}{\sigma(Q X_i)}. $$

It is interesting to investigate whether   choosing $Q$ as a random rotation, rather than identity (which is the implicit choice done in previous sections), leads to improvement in MSE, i.e., whether we can in some well-defined sense obtain an inequality of the type
\[ \sum_i \EE{Q}{\sigma(Q  X_i)} \ll \sum_i \sigma(X_i).\]

This is the case for the quantization protocol proposed in \cite{Distributed_mean}, which arises as a special case of our more general protocol. This is because the quantization protocol is suboptimal within our family of encoders. Indeed, as we have shown, with a different choice of the parameter we  can obtained results which improve, in theory, on the rotation + quantization approach. This suggests that perhaps combining an appropriately chosen rotation pre-processing step with our optimal encoder, it may be possible to achieve further improvements in MSE for any fixed communication budget.  Finding suitable random rotations $Q$ requires  a careful study which we leave to future research.

\bibliographystyle{plain}
\bibliography{references}

\appendix

\section{Additional Proofs}
\label{sec:app:alternative_protocol}

In this section we provide proofs of Lemmas~\ref{lem:unbiasedness_2} and \ref{lem:MSE_2}, describing properties of the encoding protocol $\alpha$ defined in \eqref{eq:randomized_protocol_2}. For completeness, we also repeat the statements.

\begin{lemma}[Unbiasedness]
The encoder $\alpha$ defined in \eqref{eq:randomized_protocol} is unbiased. That is,  $\EE{\alpha}{\alpha(X_i)} = X_i$ for all $i$. As a result, $Y$ is an unbiased estimate of the true average: $\EE{\alpha}{Y} = X$.
\end{lemma}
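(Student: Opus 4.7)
The plan is to prove unbiasedness coordinatewise for the variable-size-support encoder \eqref{eq:randomized_protocol}, then lift to the vector level and to the aggregate estimate $Y$ by linearity of expectation. Fix any node $i$ and coordinate $j$. Because the randomization in \eqref{eq:randomized_protocol} is specified coordinatewise via an independent two-point draw, it suffices to verify that $\EE{\alpha}{Y_i(j)} = X_i(j)$.

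I would compute this directly from the definition:
\[
\EE{\alpha}{Y_i(j)} \;=\; p_{ij}\!\left(\frac{X_i(j)}{p_{ij}} - \frac{1-p_{ij}}{p_{ij}}\,\mu_i\right) + (1-p_{ij})\,\mu_i.
\]
The $p_{ij}$ factors cancel algebraically in the first term, leaving $X_i(j) - (1-p_{ij})\mu_i$, which combines with the second summand $(1-p_{ij})\mu_i$ to give exactly $X_i(j)$. Since this holds for every coordinate, $\EE{\alpha}{\alpha(X_i)} = X_i$ as vectors, establishing the first claim.

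For the aggregate statement, the averaging decoder $Y = \frac{1}{n}\sum_{i=1}^n \alpha(X_i)$ together with linearity of expectation gives $\EE{\alpha}{Y} = \frac{1}{n}\sum_{i=1}^n \EE{\alpha}{\alpha(X_i)} = \frac{1}{n}\sum_{i=1}^n X_i = X$. There is no real obstacle here: this is a two-line calculation, identical in content to Lemma~\ref{lem:unbiasedness} in the body of the paper, so the appendix restatement amounts to reproducing the same per-coordinate cancellation and invoking linearity twice.
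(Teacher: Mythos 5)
Your computation is correct as far as it goes: for the variable-size-support encoder \eqref{eq:randomized_protocol} the per-coordinate two-point expectation collapses to $X_i(j)$, and linearity of expectation lifts this to the vector level and to $Y$. This reproduces the paper's proof of Lemma~\ref{lem:unbiasedness} essentially verbatim, so if the target were that lemma there would be nothing more to say.

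The gap is in your closing claim that the appendix restatement ``amounts to reproducing the same per-coordinate cancellation.'' The lemma proved in Appendix~\ref{sec:app:alternative_protocol} is Lemma~\ref{lem:unbiasedness_2}, which concerns the \emph{fixed-size-support} encoder \eqref{eq:randomized_protocol_2}; the reference to \eqref{eq:randomized_protocol} in its statement is a typo carried over from Lemma~\ref{lem:unbiasedness}, as the surrounding text (``describing properties of the encoding protocol $\alpha$ defined in \eqref{eq:randomized_protocol_2}'') makes clear. For that encoder your argument does not apply as written: there is no independent Bernoulli draw with parameter $p_{ij}$ per coordinate. The randomness is a single uniformly random $k$-subset $\mathcal{D}_i \in \sigma_k(d)$, so the coordinates of $Y_i$ are dependent, and the ingredient your proposal is missing is the marginal inclusion probability $\Prob(j \in \mathcal{D}_i) = \binom{d-1}{k-1}\big/\binom{d}{k} = k/d$, obtained by counting the $k$-subsets containing a fixed $j$. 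Once that is in hand, the paper's appendix proof reduces to
\[
\EE{\alpha}{Y_i(j)} = \frac{k}{d}\left(\frac{d\,X_i(j)}{k} - \frac{d-k}{k}\,\mu_i\right) + \frac{d-k}{d}\,\mu_i = X_i(j),
\]
which is formally your cancellation with $p_{ij} = k/d$; the cross-coordinate dependence is harmless because only marginals enter a linear expectation. That counting step, and the observation that it suffices, is precisely what needs to be added.
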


\begin{proof}
Since $Y(j) = \frac{1}{n}\sum_{i=1}^n Y_{i}(j)$ and $X(j) = \frac{1}{n}\sum_{i=1}^n X_{i}(j)$, it suffices to show that $\EE{\alpha}{Y_i(j)}=X_i(j)$:
\begin{align*}
\EE{\alpha}{Y_i(j)} &= \frac{1}{|\sigma_k(d)|} \sum_{\sigma \in \sigma_k(d)} \left[ 1_{(j \in \sigma)} \left( \frac{d X_{i}(j)}{k} - \frac{d-k}{k} \mu_i \right) + 1_{(j \not\in \sigma)} \mu_i \right] \\
&= \binom{d}{k}^{-1} \left[ \binom{d-1}{k-1} \left( \frac{d X_{i}(j)}{k} - \frac{d-k}{k} \mu_i \right) + \binom{d-1}{k} \mu_i \right] \\
&= \binom{d}{k}^{-1} \left[ \binom{d-1}{k-1} \frac{d}{k} X_{i}(j) + \left( \binom{d-1}{k} - \binom{d-1}{k-1} \frac{d-k}{k} \right) \mu_i \right] \\
&= X_i(j)
\end{align*}
and the claim is proved.
\end{proof}

\begin{lemma}[Mean Squared Error]
Let $\alpha = \alpha(k)$ be encoder defined as in \eqref{eq:randomized_protocol_2}. Then
\begin{equation*}
MSE_{\alpha}(X_1, \dots, X_n) = \frac{1}{n^2} \sum_{i=1}^n \sum_{j=1}^d \frac{d-k}{k} \left( X_i(j) - \mu_i \right)^2.
\end{equation*}
\end{lemma}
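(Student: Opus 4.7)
My plan is to reduce to Lemma~\ref{lem:general_MSE} and then perform a direct per-coordinate computation, exactly as in the proof of Lemma~\ref{lem:MSE}. First I would observe that the encoder $\alpha$ in \eqref{eq:randomized_protocol_2} is independent across nodes (each node samples its own $\mathcal{D}_i \in \sigma_k(d)$ uniformly and independently) and is unbiased by Lemma~\ref{lem:unbiasedness_2}. Therefore Lemma~\ref{lem:general_MSE} applies and gives
\[ MSE_{\alpha}(X_1,\dots,X_n) = \frac{1}{n^2}\sum_{i=1}^n \EE{\alpha}{\|Y_i - X_i\|^2} = \frac{1}{n^2}\sum_{i=1}^n \sum_{j=1}^d \EE{\alpha}{(Y_i(j) - X_i(j))^2}, \]
where the second equality uses linearity of expectation (note that within a single $Y_i$ the coordinates are \emph{not} independent, but linearity is all we need here).

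Next I would compute the per-coordinate expected squared error. The key elementary fact is that $\Pr[j \in \mathcal{D}_i] = \binom{d-1}{k-1}/\binom{d}{k} = k/d$, since $\mathcal{D}_i$ is uniform over $\sigma_k(d)$. Conditional on $j\in \mathcal{D}_i$ the value of $Y_i(j) - X_i(j)$ simplifies nicely:
\[ \frac{dX_i(j)}{k} - \frac{d-k}{k}\mu_i - X_i(j) = \frac{d-k}{k}\bigl(X_i(j)-\mu_i\bigr), \]
while conditional on $j\notin \mathcal{D}_i$ we have $Y_i(j) - X_i(j) = \mu_i - X_i(j)$. Thus
\[ \EE{\alpha}{(Y_i(j)-X_i(j))^2} = \frac{k}{d}\left(\frac{d-k}{k}\right)^2 (X_i(j)-\mu_i)^2 + \frac{d-k}{d}(X_i(j)-\mu_i)^2. \]

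Finally I would factor out $(X_i(j)-\mu_i)^2$ and simplify the bracket: the two terms combine to $\frac{d-k}{d}\left(\frac{d-k}{k} + 1\right) = \frac{d-k}{d}\cdot \frac{d}{k} = \frac{d-k}{k}$. Substituting this into the reduction from Lemma~\ref{lem:general_MSE} yields exactly \eqref{eq:MSE_general_2}. There is no genuine obstacle here — the only place to be slightly careful is noting that the coordinates within one $Y_i$ are correlated through the shared random set $\mathcal{D}_i$, so independence cannot be invoked coordinate-wise, only node-wise; linearity of expectation suffices for the inner sum over $j$.
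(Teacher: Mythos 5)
Your proposal is correct and follows essentially the same route as the paper's proof: reduce to Lemma~\ref{lem:general_MSE}, expand the squared norm coordinate-wise by linearity of expectation, and compute $\EE{\alpha}{(Y_i(j)-X_i(j))^2}$ using $\Pr[j\in\mathcal{D}_i]=k/d$ (the paper writes this via the equivalent binomial-coefficient count $\binom{d-1}{k-1}/\binom{d}{k}$). Your explicit remark that coordinates within one $Y_i$ are correlated but only node-wise independence and linearity are needed is a correct and worthwhile clarification that the paper leaves implicit.
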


\begin{proof} 
Using Lemma~\ref{lem:general_MSE}, we have
\begin{eqnarray}
MSE_{\alpha}(X_1,\dots,X_n) &=& \frac{1}{n^2} \sum_{i=1}^n \EE{\alpha}{\left\|Y_i - X_i \right\|^2} \notag \\
&=&\frac{1}{n^2} \sum_{i=1}^n \EE{\alpha}{\sum_{j=1}^d (Y_i(j) - X_i(j) )^2} \notag \\
&=&\frac{1}{n^2} \sum_{i=1}^n \sum_{j=1}^d \EE{\alpha}{ (Y_i(j) - X_i(j) )^2}.
\label{eq:6867sgs7_v2}
\end{eqnarray}
Further, 
\begin{align*}
\EE{\alpha}{ (Y_i(j) - X_i(j) )^2} &= 
\binom{d}{k}^{-1} \sum_{\sigma \in \sigma_k(d)} \left[ 1_{(j \in \sigma)} \left( \frac{d X_{i}(j)}{k} - \frac{d-k}{k} \mu_i - X_i(j) \right)^2 + 1_{(j \not\in \sigma)} \left( \mu_i - X_i(j) \right)^2 \right] \\
&= \binom{d}{k}^{-1} \left[ \binom{d-1}{k-1} \frac{(d-k)^2}{k^2} \left( X_i(j) - \mu_i \right)^2 + \binom{d-1}{k} \left( \mu_i - X_i(j) \right)^2 \right] \\
&= \frac{d-k}{k} \left( X_i(j) - \mu_i \right)^2.
\end{align*}
It suffices to substitute the above into \eqref{eq:6867sgs7_v2}.
\end{proof}

\end{document}